\theoremstyle{definition}
\newtheorem{theorem}{Theorem}
\title{Deep Declarative Risk Budgeting Portfolios}
\author{Manuel Parra-Diaz\footnote{manuel.parra@urosario.edu.co}\; and Carlos Castro-Iragorri\footnote{carlos.castro@urosario.edu.co} } 
\begin{document}
\maketitle

\begin{abstract}
Recent advances in deep learning have spurred the development of end-to-end frameworks for portfolio optimization that utilize implicit layers. However, many such implementations are highly sensitive to neural network initialization, undermining performance consistency. This research introduces a robust end-to-end framework tailored for risk budgeting portfolios that effectively reduces sensitivity to initialization. Importantly, this enhanced stability does not compromise portfolio performance, as our framework consistently outperforms the risk parity benchmark.
\end{abstract}


Keywords: end-to-end framework, neural networks, risk budgeting, stability\\

JEL: C45, C13, G11   

\section{Introduction}
\label{sec:intro} 

Portfolio optimization is a crucial financial task in which investors aim to maximize returns while effectively managing risk. The End-to-End framework offers a promising approach to portfolio optimization by integrating deep learning techniques with traditional financial models \cite{Uysal2023}. This innovative architecture enables a comprehensive and efficient resolution of various steps in portfolio optimization, including parameter estimation and the determination of optimized weights. The decision-making process can be tailored to accommodate different investor objectives, making the approach highly flexible. Additionally, the framework is tractable, potentially scalable, and provides numerical sensibilities to all elements of the problem. Its features can be implemented using data-driven insights, model-based decisions, or a combination of both, significantly enhancing portfolio allocation efficiency for both institutional and individual investors.


Despite its advantages, implementing the End-to-End framework presents several challenges. A primary issue lies in ensuring smooth and coherent connections among its layers. Specifically, the relationships between inputs and outputs must be well-defined and exhibit stable behavior to produce reliable results. If these connections are compromised, the integrity of the optimization process may be affected, leading to numerical instability. For instance, instability in one layer’s output can propagate through the framework, ultimately undermining the optimization’s effectiveness.


Furthermore, the framework must be sufficiently adaptable to accommodate various investment objectives, such as maximizing returns, minimizing risk, or striking a balance between the two. This flexibility is essential for practitioners seeking to tailor strategies to specific investment goals while navigating complex market dynamics.

To address these challenges, we propose an extension to the End-to-End framework that enhances numerical stability within the optimization layer. Our approach introduces a bounded softmax layer, which ensures that the inputs to the optimization process remain stable and well-behaved. By constraining the layer’s outputs, we mitigate risks associated with extreme values that could disrupt the optimization process. This enhancement not only improves the reliability of the optimization results but also strengthens the framework’s overall performance, allowing it to accommodate a broader range of assets and investment scenarios.

To validate our proposed extension, we conduct a case study applying the bounded softmax layer to a well-established portfolio optimization methodology: risk budgeting. Risk budgeting is an increasingly popular method that allocates risk across assets based on their contributions to overall portfolio risk. By integrating our bounded softmax layer into this methodology, we demonstrate its effectiveness in enhancing the stability and reliability of the optimization process.

Our case study results indicate that the bounded softmax layer provides performance results that are $70\%$ to $100\%$ more reliable (implying a lower dispersion) than without this layer. Using performance metrics such as the Sharpe Ratio and the cumulative returns in the out-of-sample periods we obtain dispersion among results below $1\%$ for different starting values of the neural network weight that are a fundamental part of the end-to-end framework; this is a significant gain over the results replicated from \cite{Uysal2023}. 

Notably, this extension enables the framework to accommodate a larger number of assets without encountering substantial difficulties in achieving stable weight allocations. This capability is particularly valuable in today’s financial markets, where investors often seek to diversify portfolios across a wide range of asset classes to mitigate risk. The bounded softmax layer provides a robust mechanism for managing the numerical complexities of larger portfolios, ensuring that the optimization process remains both effective and  computationally efficient.

The paper is organized as follows. Section \ref{sec:literature} provides an overview of End-to-End modelling for portfolio optimization. Section \ref{sec:RiskBudgetingPortfolios} presents the fundamentals of risk budgeting portfolios and the \textit{plug-in} approach to portfolio optimization.  
Section \ref{sec:e2eframeworks} explains End-to-End frameworks emphasizing the shortcomings in terms of numerical stability and provides a detailed explanation of the proposed bounded softmax layer. 
Section \ref{sec:results} illustrates the main results with a case study based on real market data and a detailed explanation of the computational setup. 
Finally, Section \ref{sec:conclude} presents the conclusions and discuss possible extensions.

\section{Literature review}
\label{sec:literature}

End-to-end modeling in portfolio optimization represents a significant shift from traditional methods by integrating prediction and decision-making processes into an interconnected layered framework. While traditional portfolio optimization typically involves independent steps and employs the so-called \textit{plug-in} approach to link estimation and optimization stages, the layered approach provides a more cohesive modeling structure. This approach aims to address the limitations of conventional two-step processes by directly optimizing portfolio weights based on raw data inputs and targeting a specific performance outcome, thereby enhancing efficiency and potentially improving financial performance. The integration is expected to mitigate errors associated with separate prediction and optimization stages, leading to more robust investment decisions \cite{Anis2025}.

Recent studies propose initial implementations of the end-to-end modeling framework applied to mean-variance optimization \cite{Butler2022} or portfolio allocations defined by risk contributions \cite{Uysal2023}. In the former, the authors design a neural network with a layer for parametric estimation in a factor model, connected to a differentiable quadratic programming layer that solves a regularized mean-variance problem. In \cite{Uysal2023}, the authors present two approaches for portfolio construction: model-free and model-based networks, with the latter incorporating a risk budgeting model as an implicit layer. Risk budgets are estimated directly from the data and integrated into the portfolio optimization process to obtain optimal asset weights targeting specific outcomes such as maximizing cumulative ex-post returns or the Sharpe Ratio.

Empirical applications vary from seven to fifty assets, demonstrating promising results in controlling portfolio volatility and outperforming approaches that do not employ the end-to-end methodology, such as nominal risk parity or mean-variance optimization without regularization. The authors also highlight the need for further research to validate the efficacy and stability of these solutions.
Recent contributions provide notable extensions to these initial approaches. \cite{Zhang2021} propose a score-based layer that circumvents the reliance on specific moments such as mean and variance. The scores are linked to portfolio weights, determining the optimal stock combination according to a pre-specified financial outcome, such as cumulative returns adjusted for transaction costs. This method accommodates a large asset space (over one hundred stocks) and incorporates constraints on weights and the number of stocks (cardinality constraints). \cite{Anis2025} evaluate the complexities of including cardinality constraints as a layer between estimation and optimization layers. Their focus is on the cardinality-constrained minimum-variance portfolio optimization problem, utilizing a linear factor model for returns. This approach enhances the realism of portfolio construction while introducing computational challenges, as the optimization process must navigate a larger solution space while adhering to constraints on stock selection and weight limits. Their study initially considers fifty assets with varying degrees of cardinality constraints, targeting portfolios containing between 10 and 20 assets. The primary challenge is ensuring optimization efficiency while achieving optimal asset selection, as these constraints complicate the feasible region of the problem. To address this, advanced optimization techniques such as heuristic algorithms or mixed-integer programming can streamline the selection process while maintaining adherence to cardinality limits. This framework generates in-sample and out-of-sample portfolios with lower volatility and higher returns than traditional decoupled approaches when applied to real-world financial data.

\cite{Zhong2024} propose a Direct Sorted Portfolio Optimization (DSPO) end-to-end framework. This framework constructs characteristic-sorted portfolios directly from raw stock data without extensive manual feature engineering. Portfolio sorting is carried out using a Monotonic Logistic Regression loss function to rank assets optimally in relation to ex-post realized returns. The DSPO model achieved a rank information coefficient of $10.12$ and an accumulated return of $121.94\%$ on NYSE stocks.
Overall, end-to-end portfolio optimization and its multiple extensions show promising results. However, further refinement is necessary to enhance result stability and ensure clear replicability, particularly as some extensions introduce greater complexity.

For portfolio optimization problems there are different interpretations of instability problems. The classical ill-posed problem is a recurrent one in the "predict them optimize" strategy where generally the optimal weighted are instable and non-unique, meaning that small variation in the inputs that are \textit{plugged-in} to the optimization problem can generate important source of instability in the end-results: portfolio weights and performance. There is a large literature that proposed different correction to address the problems, for example the use of regularization techniques specially in the context of large portfolios (50-100 assets) \cite{Brodie2009}. L1 type regularization imposes sparsity in the final results, that is the effective number of actives positions (portfolio weights different from zero) is significantly reduced. Some extensions \cite{Dai2018} add shinkage to L1 regularization in order to improve performance.

Another approach is motivated by accounting for model risk and the intent is to obtain robust results, for example by explicitly accounting for uncertainty in the data. Robust portfolio optimization (RO) is a strategy in asset allocation that focuses on optimizing the performance of a portfolio under the worst-case scenarios of uncertain inputs. By embedding uncertainties into a deterministic framework, robust portfolio optimization provides a stable solution that tends to perform better out-of-sample compared to classical mean-variance portfolios \cite{Fabozzi2010,Xidonas2020}. More recently \cite{Georgantas2024} compares the performance of different RO versions of classical models: mean-variance with box (MVBU) or ellipsoid (MVEU) uncertainty , worst-case VaR and CVaR, worst-case Omega Ratio or a combination of some of them (multi-objective optimization). The performance of the models is evaluated considering the number of active assets and risk return trade-off. In general, RO performed better than their non-RO counterparts although their asset compositions are relatively similar in terms of diversification and turnover. Specifically, the MVBU model, while underperforming compared to the MV model during stable periods, showed consistently better results during stock market crises. Conversely, the MVEU model outperformed the MV model in five out of six indicators during the full test period but did not perform as well during crisis conditions.


\section{Risk Budgeting Portfolios}
\label{sec:RiskBudgetingPortfolios}

Consider and investment universe with $n$ assets. Let $\mathbf{w}=\left(w_1,w_2,...,w_n\right)^\top$ be the vector of portfolio weights and $\Sigma \in\mathbb{R}^{n\times n}$ be the covariance matrix of asset returns and the portfolio risk $\sigma$ is measured by the volatility, i.e. 
\begin{equation}
    \sigma(\mathbf{w})=\sqrt{\mathbf{w}^\top \Sigma\mathbf{w}}.
    \label{eq:volatility}
\end{equation}
The \textit{marginal risk contribution} of asset $i$ to the overall portfolio risk is given by the partial derivative of $\sigma(\mathbf{w})$ with respect to $w_i$: $$\dfrac{\partial\sigma(\mathbf{w})}{\partial w_i}=\dfrac{\left(\Sigma\mathbf{w}\right)_i}{\sigma (\mathbf{w})},$$
so that the \textit{total risk contribution} of asset $i$ is
\begin{equation}
    RC_i(\mathbf{w})=\dfrac{\partial\sigma(\mathbf{w})}{\partial w_i}=\dfrac{w_i\left(\Sigma\mathbf{w}\right)_i}{\sigma (\mathbf{w})}.
    \label{eq:riskContribution}
\end{equation}
Assume a pre-specified \textit{risk budget vector} $\mathbf{b}=\left(b_1,b_2,...,b_n\right)^\top$ where each $b_i\geq 0$ and $\mathbf{1}^\top \mathbf{b}=1$. This vector represents the target proportions of total portfolio risk that each asset should contribute.

The \textit{risk budgeting portfolio optimization problem} is then to determine the portfolio weights $\mathbf{w}$ such that the risk contribution of each asset is proportional to its risk budget. Formally this require that
\begin{equation}
    \dfrac{\partial RC_i(\mathbf{w})}{\sigma (\mathbf{w})}=b_i,\quad \text{for } i=1,2,...,n.
\end{equation}
Substituting the equations (\ref{eq:volatility}) and (\ref{eq:riskContribution}), this condition becomes:
\begin{equation}
    \dfrac{w_i\left(\Sigma\mathbf{w}\right)_i}{\mathbf{w}^\top \Sigma\mathbf{w}} \Leftrightarrow w_i\left(\Sigma\mathbf{w}\right)_i = b_i\mathbf{w}^\top \Sigma\mathbf{w}, \quad \text{for } i=1,2,...,n.
\end{equation}
subject to the standard portfolio constraint $\mathbf{1}^\top \mathbf{w}=1$.

This formulation ensures that each asset's contribution to the overall portfolio risk aligns with its predetermined risk budget. When all $b_i=\nicefrac{1}{n}$, the problem reduces to the well-known \textit{risk parity} portfolio optimization problem.

In many practical portfolio optimization settings, a "predict then optimize" strategy—also known as the classical plug‐in approach—is commonly employed. This methodology divides the decision-making process into two distinct stages. First, historical or auxiliary data are used to generate estimates of key model parameters, such as expected returns and covariances (or risk budgets for the risk budgeting portfolio optimization problem). These estimates are then directly substituted into an optimization framework in the second stage, where the optimal portfolio weights are determined under given constraints and objectives. Although this decoupled approach simplifies the process by treating prediction and optimization separately, it inherently links the quality of the final portfolio to the accuracy of the parameter estimates. The formal definition provided below captures this two-step process, detailing how parameter estimation feeds into the optimization of portfolio weights.

Let $\mathbf{\theta}$ denote the (unknown) parameters of the return distribution of $n$ assets. For instance, in the risk budgeting framework, one have $\mathbf{\theta}=(\mathbf{b},\Sigma)$. Given historical or auxiliary data $\mathbb{D}$, the investor uses a statistical model to compute an estimate $\hat{\mathbf{\theta}}=\mathcal{P}(\mathbb{D})$, where $\mathcal{P}$ is a prediction procedure e.g., maximum likelihood estimation, that maps the data $\mathbb{D}$ to an estimate in the parameter space $\mathbf{\Theta}$.
Now, suppose the investor's objective is to maximize a utility function or minimize a risk measure that depends on the portfolio weights $\mathbf{w}$ and the parameters $\theta$. We denote this objective function by $U(\mathbf{w};\mathbf{\theta})$. For example, in the risk budgeting setup the objectives might be written as
\begin{equation}
    U(\mathbf{w};\mathbf{b},\Sigma)=\sum_{i}^n \left(w_i\left(\Sigma\mathbf{w}\right)_i - b_i\mathbf{w}^\top \Sigma\mathbf{w}\right)^2
\end{equation}
In the plug-in approach, the estimated parameters $\hat{\mathbf{\theta}}$ are substituted or \textit{plugged in} into the objective function, and the optimal portfolio weights are chosen as if the estimated parameters were the true ones. For example, the risk budgeting portfolio optimization problem in the plug-in approach is defined as
\begin{equation}
    \mathbf{w}^*= \underset{w\in\mathcal{W}}{\text{arg min}} \; U(\mathbf{w};\mathbf{b},\Sigma)
    \label{eq:riskBudgetingProblem}
\end{equation}
where $\mathcal{W}=\displaystyle\left\{(w_1, w_2, ..., w_n)\in\mathbb{R}^n\left\vert \sum_{i=1}^n w_i = 1\right.\right\}.$ 




\section{End-to-End Frameworks}
\label{sec:e2eframeworks}

The end-to-end portfolio provides a unified framework for estimation, optimization allocation, and risk management in a classical static problem with two periods: the decision period $t$ and the realization period $t+1$.

Let $(X_{t},X_{t+1} \in \mathcal{X})$ represent the input and output variables in a predictive problem. We assume that the variables follow an unknown stationary distribution $\mathcal{D}$ whose moments provide relevant information for the portfolio problem. \\

The general end-to-end portfolio problem will be described as follows:
\begin{subequations}
\label{eq:optim}
\begin{align}
    \underset{\theta}{\text{minimize}}
        & \quad R_{\theta}(Z):=E_{t}[f(X_{t},X_{t+1},z^{*}(X_{t};\theta)]   \label{eq:outputDecode}\\
    \text{subject to} 
        & \quad z^{*}(X_{t},\theta)= \underset{z \in \mathcal{Z}}{\text{argmin}}\quad E_{D}[c(H(X_{t}),z)]   \label{eq:optLayer}
\end{align}
\end{subequations}

\textcolor{blue}{continue...}

\begin{figure}[H]
    \begin{center}
        \begin{tikzpicture}
            \Vertex[x=1,y=3,size=1, shape=rectangle,label=X_{t}, Math]{A}
            \Vertex[x=3,y=3,size=1, shape=rectangle,label=H(X_{t}), Math]{B}
            \Edge[Direct](A)(B)
            \Vertex[x=6.5,y=3,size=3.2,shape=rectangle,label= \displaystyle\underset{z}{\text{min}}\, E_{D}\left(c(H(X_{t});z)\right),fontscale=1, Math]{C}
            \Edge[Direct,label=\sigma,fontcolor=red, Math](B)(C)
            \Vertex[x=10.2,y=3,size=1,shape=rectangle,label=R_{\theta}(Z), Math]{D}
            \Edge[Direct,label=\frac{z}{\mid \mid z \mid \mid_{1}},fontcolor=red, Math](C)(D)
        \end{tikzpicture} 
        \caption{Neural network architecture for generic portfolio end-to-end problem}
    \label{fig:genE2EArch}
    \end{center}
\end{figure}
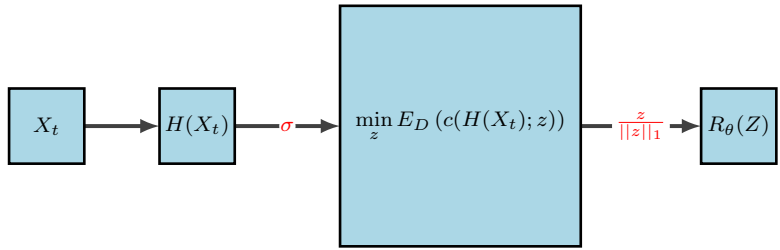


As explained in section \ref{sec:literature}, \cite{Uysal2023} proposed end-to-end modelling framework for risk budgeting portfolios. The empirical illustration is based on a portfolio of seven exchange-traded funds (ETFs). The framework employs a fully connected neural network that begins with a hidden layer of seven neurons activated by a Leaky ReLU function (with parameter $\alpha=0.1$). This is followed by a second hidden layer of seven neurons—representing the risk budget vector—utilizing a softmax activation function. Next, a CvxPyLayer is incorporated to solve the risk budgeting portfolio optimization problem, and a final layer computes a risk-reward function based on the normalized outcomes of this convex optimization.
The training algorithm adopts a rolling window approach. For the market data experiments, approximately 35\% of the dataset is used for training, 25\% for hyperparameter optimization, and the remaining 40\% for evaluation. The experiments report results for both the Sharpe ratio and cumulative return as risk-reward functions. In the following section we replicate \cite{Uysal2023} end-to-end framework and the main empirical results in order to illustrate the dependence of this particular setup on the initial conditions of the neural network weights used to implement the framework.    

\subsection{Stability to initial conditions}
\label{sec:stability}

To show the stability to initial conditions that exists in standard end-to-end frameworks, we replicate the model-based approach proposed in \cite{Uysal2023}, we ran the training algorithm with optimized parameters for each risk-reward function: Sharpe ratio and Cumulative return. Using $15$ different random seeds, i.e., $15$ different initializations of the neural network weights.

Figure \ref{fig:VarTrainUysal} illustrates the stability to initial conditions by showing the regions defined by the minimum ($\min(t)$) and maximum ($\max(t)$) cumulative returns achieved during the training period for each risk-reward function. Additionally, we plot the time series of midpoints of these regions, providing a better understanding for the measure of dispersion at each time step, that in our case is the range or dispersion estimator:  
$$v(t) = \max(t) - min(t).$$  
\begin{figure}[H]
    \centering
    \includegraphics[width=0.9\textwidth]{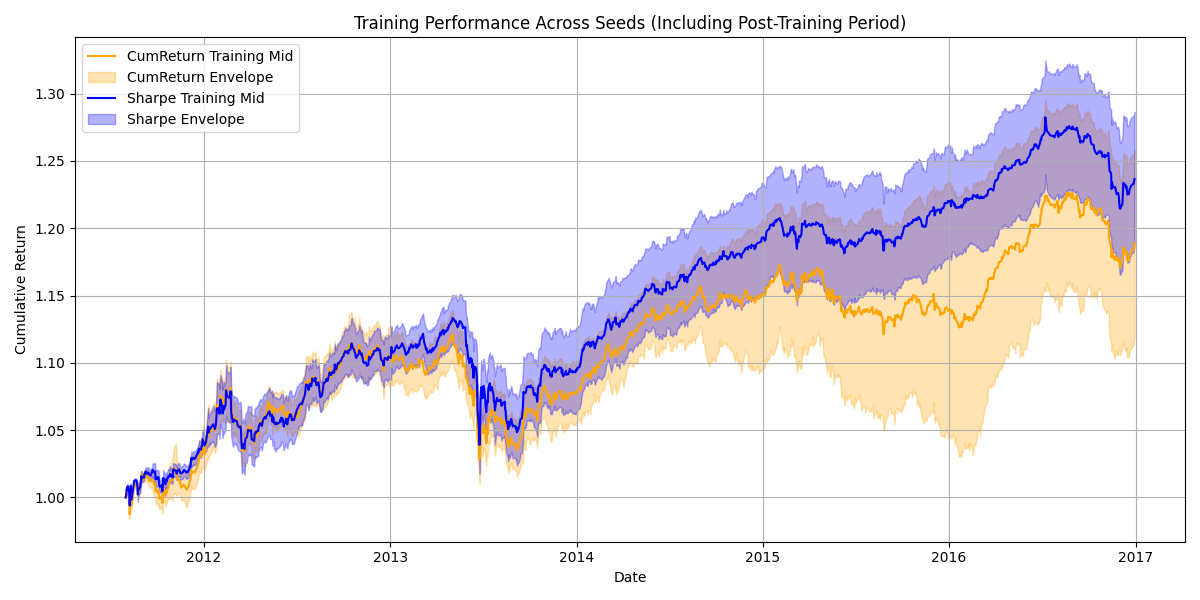}
    \caption{Dispersion of cumulative return during the training period for 15 seeds.}
    \label{fig:VarTrainUysal}
\end{figure}

It is important to clarify that we do not expect dispersion to be zero, as this could indicate an overfitting issue. Some variation across seeds is natural due to the stochastic optimization algorithm and the inherent bias-variance tradeoff. 
    
However, following the argument in \cite{Zhong2024}, we emphasize that in industry, portfolio allocation strategies should not exhibit excessive dispersion in results when using numerical optimization. This is why we aim to eliminate the dependence of the trained model’s performance on its weight initialization, which is a reasonable and practical objective.  

\begin{figure}[H]
    \centering
    \includegraphics[width=0.9\textwidth]{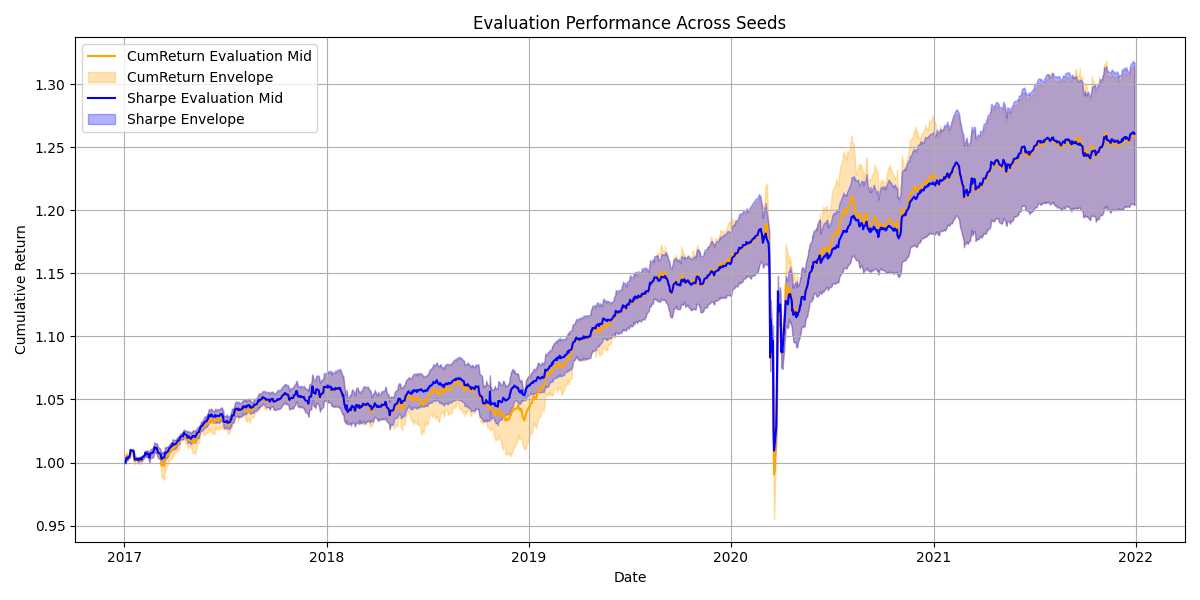}
    \caption{Dispersion of cumulative return during the evaluation period for 7 seeds.}
    \label{fig:VarTestUysal}
\end{figure}

Figure \ref{fig:VarTestUysal} presents the same setup for the evaluation period, further justifying our objective of improving the robustness of the risk-budgeting portfolio construction. Specifically, we aim for a significant reduction in the range $v(t)$ across all $t$ in the training period to ensure that decisions made during evaluation remain reliable. A summary of the dispersion results for this replication is shown in Table \ref{tab:sumUysal}.

\begin{table}[H]
  \centering
  \caption{Summary of dispersion results}\label{tab:sumUysal}
  \setlength\tabcolsep{0pt}
  \begin{tabular*}{\linewidth}{@{\extracolsep{\fill}}
                        l
                        S[table-format=-1.3]
                   *{2}{S[table-format= 1.3]}
                        S[table-format=-1.2]
                        S[table-format=-1.3]
                   *{2}{S[table-format= 1.3]}
                            }
    \specialrule{1.5pt}{2pt}{2pt}
    &   \multicolumn{2}{c}{Cumulative Return}   &   \multicolumn{2}{c}{Sharpe Ratio}   \\
    \cmidrule(lr){2-3}
    \cmidrule(lr){4-5}
    &   {\thead[b]{Training}}
        &   {\thead[b]{Evaluation}}
            &   {\thead[b]{Training}}
                            &   {\thead[b]{Evaluation}}                    \\
    \midrule
    $\text{Maximum }v(t)$  & 20.07\% & 11.52\% & 10.19\% & 11.32\%  \\
    $\text{Average }v(t)$  & 7.78\%  & 4.87\%  & 5.83\%  & 4.49\%  \\
    $\text{Last day }v(t)$ & 13.75\% & 10.87\% & 9.96\%  & 11.27\%  \\
    \specialrule{1.5pt}{2pt}{2pt}
  \end{tabular*}
\end{table}

\subsection{Stability-Focused End-to-End Framework}
\label{sec:e2eProposed}

Recent literature has been shifting towards developing robust end-to-end frameworks for portfolio allocation In \cite{Costa2023} the authors introduced a decision layer formulated to maximize an expression similar to a mean-variance utility function. However, instead of using portfolio variance as a risk measure, they define deviation risk measure estimated over a distribution of data that admits ambiguity concerning the true distribution of the data. Under this set up the decision layer can be framed as a minmax problem that can be efficiently managed through convex duality

One common critique of classical frameworks, such as Markowitz’s theory, is that they often yield low-variance but highly biased results. However, we cannot move to the opposite extreme, where high-variance but low-bias results dominate.  
    
Would an investor be willing to rely on a strategy that, due to high variability, might underperform the risk-parity benchmark? The dispersion of the Sharpe ratio in this setup suggests that such cases arise—not due to the fundamental risk-reward tradeoff of investing but rather as a consequence of the inherent limitations of early end-to-end frameworks.

To mitigate the high dispersion caused by instability to initial conditions, we propose several modifications to the previously studied framework. We optimize the number of neurons in the first hidden layer and we use a set of classical and aggressive learning rates to be tested. Next, we introduce a \textit{lower-bounded softmax function} to prevent risk budgets from vanishing before reaching the risk-budgeting optimization layer. This ensures that allocations remain meaningful and avoids degenerate solutions where certain assets receive near-zero weight allocations.

\subsubsection{Forward Pass}

Figure \ref{fig:architecture1} presents the computational graph of the proposed end-to-end framework. Given an initial dataset of asset features, the forward pass estimates the risk budget vector, utilizes this estimation to solve the risk budgeting optimization problem, normalizes the resulting allocation, and finally computes the loss function based on the obtained portfolio weights.

\begin{figure}[H]
    \centering
    \includegraphics[width=0.9\linewidth]{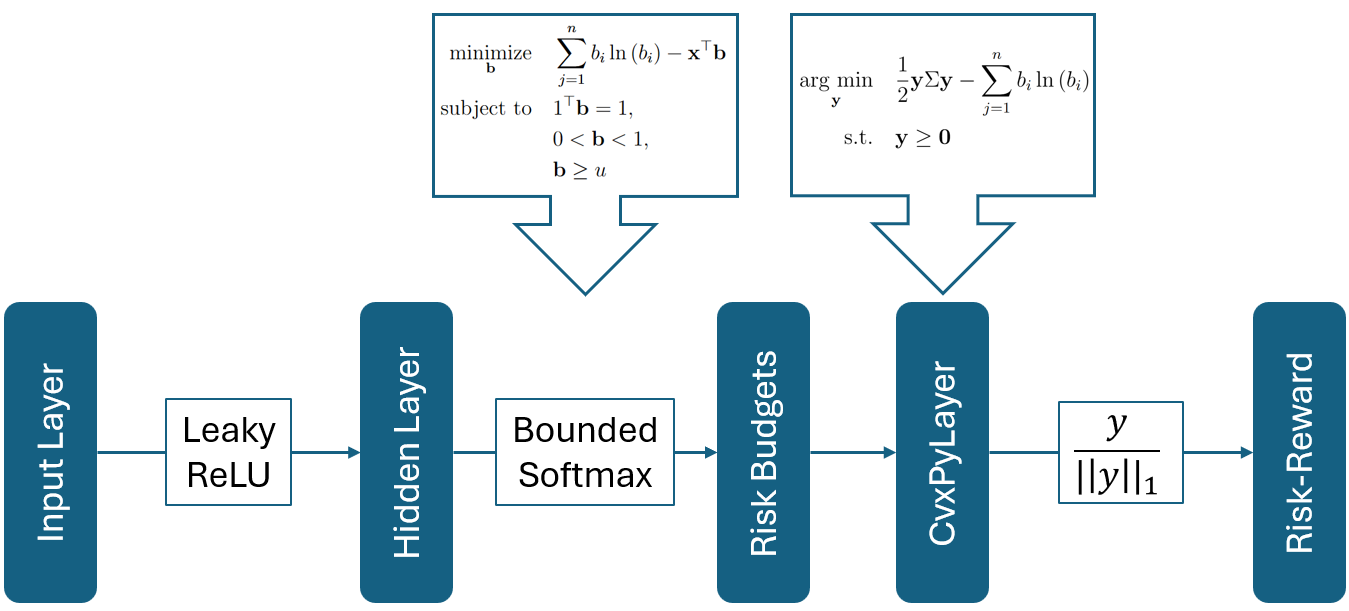}
    \caption{Scheme of the architecture.}
    \label{fig:architecture1}
\end{figure}

\textbf{Bounded Softmax}\\

The softmax activation function transforms an n-dimensional input vector ($\mathbf{x}$) into a probability-like distribution ($\mathbf{b}$), which, in this case, represents the investor's risk budgets. The standard softmax function is defined as:
\begin{equation}\label{eq:softmax}
    \mathbf{b}_i=\dfrac{e^{x_i}}{\displaystyle\sum_{j=1}^n e^{x_j}}
\end{equation}
The standard softmax allows risk budget values that are arbitrarily close to zero. However, as demonstrated in \cite{Roncalli2013}, the risk budgeting optimization problem has a unique solution only when all risk budgets ($\mathbf{b}$) are strictly positive. If any risk budget is exactly zero, the problem loses uniqueness, introducing instability in the optimization process.

We find that using an unbounded softmax activation to generate risk budgets leads to instability in both problem formulation and solutions. Specifically, certain training periods exhibited near-zero risk budgets (e.g., $10^{-6}$), which effectively behaved as computationally vanishing risk budgets, leading to unreliable allocations. This phenomenon was confirmed by solving the optimization problem outside the neural network framework for comparison.
From a financial perspective, enforcing a strictly positive lower bound on risk budgets is also meaningful. Portfolios with vanishing risk budgets imply complete exclusion of assets, contradicting diversification principles. Additionally, even minimal exposure to all assets ensures a better-conditioned optimization problem and reduces overfitting to specific asset groups.

To address this issue, we introduce a modified softmax activation function that incorporates a lower bound constraint. This is achieved by solving the following convex optimization problem:
\begin{equation}\label{eq:bounded}
    \begin{aligned}
        \underset{\mathbf{b}}{\text{minimize}} \quad & \sum_{j=1}^n b_i\ln{(b_i)}-\mathbf{x}^\top \mathbf{b} \\
        \text{subject to} \quad & 1^\top \mathbf{b}=1,\\
        & 0<\mathbf{b}<1, \\
        & \mathbf{b}\geq u
    \end{aligned}
\end{equation}
The solution to this problem results in a softmax-like function with a lower bound $u$. This result is formalized in Theorem \ref{th:boundedSoftmax}, with the proof provided in Appendix \ref{sec:boundedSoftmax}.

\begin{theorem}[Bounded Softmax]
    \label{th:boundedSoftmax}
    Let be $u$ the desired lower bound to the softmax function and $k=\#(A)$, with
    $$A=\left\{i\in\{1,2,...,n\}: \dfrac{e^{x_i}}{\sum_{j=1}^n e^{x_j}}\geq u\right\}$$
    The function 
    \begin{equation}\label{eq:bsoftmax}
        \mathbf{b}_i(\mathbf{x})=\left\{\begin{array}{cl}
            \dfrac{e^{x_i}}{\sum_{j\in A} e^{x_j}}\cdot\left(1-(n-k)u\right) & \text{if } i\in A\\
            u & \text{if } i\notin A
        \end{array}\right.
    \end{equation}
    is the unique solution to problem \ref{eq:bounded}.
\end{theorem}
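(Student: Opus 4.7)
The plan is to solve problem \ref{eq:bounded} through its KKT conditions, since the objective $\sum_j b_j \ln b_j - \mathbf{x}^\top \mathbf{b}$ is strictly convex on the simplex (the negative entropy is strictly convex and the linear term in $\mathbf{x}$ is affine) while the feasible region is convex and compact whenever $nu\leq 1$. Strict convexity then yields existence and uniqueness of a minimizer, so producing any KKT point will finish the proof.

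I would introduce a multiplier $\lambda\in\mathbb{R}$ for the equality $\mathbf{1}^\top\mathbf{b}=1$ and multipliers $\mu_i\geq 0$ for the lower bounds $b_i\geq u$; the strict upper bound $b_i<1$ is automatically inactive once $u>0$, so it may be ignored. Stationarity of the Lagrangian reads $\ln b_i + 1 - x_i + \lambda - \mu_i = 0$, i.e.\
\begin{equation*}
b_i = e^{x_i - 1 - \lambda + \mu_i}.
\end{equation*}
Complementary slackness $\mu_i(b_i-u)=0$ partitions the indices into an inactive part, where $\mu_i=0$ and $b_i = e^{x_i-1-\lambda}$, and an active part, where $b_i=u$. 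Writing $A$ for the inactive set and $k=\#(A)$, the normalization $\sum_i b_i = 1$ forces
\begin{equation*}
e^{-1-\lambda} = \frac{1-(n-k)u}{\sum_{j\in A} e^{x_j}},
\end{equation*}
which reproduces formula \ref{eq:bsoftmax} on $A$ while $b_i=u$ on its complement.

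The remaining step, and the one that carries the real content, is to verify that the inactive set of the KKT point coincides with the set $A=\{i : e^{x_i}/\sum_j e^{x_j}\geq u\}$ announced in the theorem. I would check two consistency conditions: (i) the candidate values produced by the formula satisfy $b_i\geq u$ for every $i\in A$, so the candidate is feasible; and (ii) the multipliers $\mu_i = \ln u - x_i + \ln\bigl(\sum_{j\in A}e^{x_j}/(1-(n-k)u)\bigr)$ implied by the active equations are nonnegative for every $i\notin A$. I expect (ii) to follow cleanly from the defining inequality $e^{x_i}/\sum_j e^{x_j}<u$ together with the bound $\sum_{j\in A} e^{x_j} > (1-(n-k)u)\sum_j e^{x_j}$, while (i) is the main obstacle: it requires showing that renormalizing the softmax to $A$ and scaling by $1-(n-k)u$ does not push any entry of $A$ below the floor $u$. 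Once both are in place, the uniqueness from the first paragraph promotes the candidate to the global optimum.
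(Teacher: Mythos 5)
Your outline follows the same route as the paper's proof: write the KKT conditions, observe that the interior constraints $0<\mathbf{b}<1$ are inactive, and read off $b_i=e^{x_i-1-\lambda+\mu_i}$ with the active set giving $b_i=u$ and the normalization fixing $\lambda$. You are in fact more careful than the paper, which stops after deriving the functional form and never checks primal feasibility of the candidate on $A$ nor dual feasibility ($\mu_i\geq 0$) off $A$; you correctly isolate these as conditions (i) and (ii). You also supply the existence/uniqueness argument via strict convexity of the negative entropy, which the paper omits entirely.

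The problem is that condition (i), which you flag as ``the main obstacle,'' does not merely remain to be verified --- it fails in general, so the step would break. For $j\notin A$ one has $e^{x_j}<u\sum_{l}e^{x_l}$, hence $\sum_{j\in A}e^{x_j}>\bigl(1-(n-k)u\bigr)\sum_{l}e^{x_l}$, and therefore the candidate value for $i\in A$ satisfies
\begin{equation*}
b_i=\frac{e^{x_i}}{\sum_{j\in A}e^{x_j}}\bigl(1-(n-k)u\bigr)<\frac{e^{x_i}}{\sum_{l}e^{x_l}},
\end{equation*}
i.e.\ renormalizing to $A$ strictly \emph{shrinks} every surviving coordinate. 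So an index whose unconstrained softmax value is only marginally above $u$ can be pushed below the floor. Concretely, take $n=3$, $u=0.3$ and $x$ with softmax values $(0.3,\,0.65,\,0.05)$: then $A=\{1,2\}$, $k=2$, and the theorem's formula gives $b_1=\tfrac{0.3}{0.95}\cdot 0.7\approx 0.221<u$, which is infeasible; the true optimizer is $(0.3,\,0.4,\,0.3)$, whose active set $\{1,3\}$ differs from the complement of $A$. The underlying issue is that the correct active set cannot be obtained by one-shot thresholding of the unconstrained softmax --- it must be determined iteratively (re-threshold after each renormalization, as in water-filling or projection onto a capped simplex). Your KKT machinery and uniqueness argument are sound, but completing the proof requires replacing the set $A$ of the statement by the fixed point of that iteration; as written, neither your outline nor the paper's own proof (which silently skips both feasibility checks) establishes the theorem, and the theorem's formula is not correct for the stated $A$.
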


\textbf{Risk Budgeting Optimization Layer}\\

\cite{Agrawal2019} introduced the concept of implicit layers in neural networks that solve convex optimization problems formulated as \textit{Disciplined Convex Programs}. The bounded softmax function presented above is an example of such an implicit layer. Similarly, we aim to integrate the risk budgeting optimization problem within the neural network framework by formulating it as a convex optimization problem.

\cite{Feng2015} showed that classical risk budgeting problem is non-convex, making direct optimization challenging, especially for large asset universes. However, \cite{Richard2019} demonstrated an equivalent convex formulation of the problem, allowing for computationally efficient optimization:

\begin{equation}\label{eq:convexForm}
    \begin{aligned}
        \mathbf{y}^* = \quad \underset{\mathbf{y}}{\text{arg min}} \quad & \dfrac{1}{2}\mathbf{y}\Sigma\mathbf{y}-\sum_{j=1}^n b_i\ln{(y_i)} \\
        \text{s.t.} \quad & \mathbf{y}\geq \mathbf{0}
    \end{aligned}
\end{equation}
The optimal portfolio weights are then obtained as: $$\mathbf{w}^*=\dfrac{y_i^*}{\displaystyle\sum_{j=1}^n y_j^*}$$
Notably, we use the quadratic form $\frac{1}{2}\mathbf{y}^\top\Sigma\mathbf{y}$ instead of the original square root formulation $\sqrt{\mathbf{y}^\top\Sigma\mathbf{y}}$ for computational efficiency. Since both functions are monotonic, they yield the same optimal solutions, while the quadratic form is more amenable to numerical optimization.

By embedding this convex optimization layer into the neural network, we enable efficient learning of risk budget allocations while ensuring that the obtained solutions remain well-conditioned and financially meaningful. The next section describes the backward pass, detailing how gradients propagate through the framework to facilitate end-to-end learning.

\subsubsection{Backward Pass}

The backward pass ensures that the gradients of all parameters within the end-to-end framework are correctly computed and propagated through the layers. This enables efficient optimization of the model by leveraging automatic differentiation and convex optimization layers.

Once the forward pass completes, the gradients of the loss function with respect to the model parameters must be computed. Let $\mathcal{R}$ denote the loss function. Using the chain rule of differentiation, the gradients propagate backward from the loss function to the earlier layers of the network. Mathematically, for any parameter $\theta$ in the network, we compute:
\begin{equation*}
    \dfrac{\partial\mathcal{R}}{\partial\theta} = \dfrac{\partial\mathcal{R}}{\partial\mathbf{w}}\cdot\dfrac{\partial\mathbf{w}}{\partial\mathbf{y}}\cdot\dfrac{\partial\mathbf{y}}{\partial\mathbf{b}}\cdot\dfrac{\partial\mathbf{b}}{\partial\mathbf{x}}\cdot\dfrac{\partial\mathbf{x}}{\partial\theta}
\end{equation*}

Each term in this expression represents a gradient computed at a specific stage of the computational pipeline.\\

\textbf{Differentiability of the Risk Budgeting Optimization Layer}\\

Since the risk budgeting allocation is obtained through a convex optimization layer, special care must be taken to ensure differentiability. The implicit function theorem \cite{Agrawal2019} provides a way to differentiate the optimal solution $\mathbf{y}^*$ with respect to the risk budget vector $\mathbf{b}$. Given the convex program:
$$\mathbf{y}^* = \quad \underset{\mathbf{y}}{\text{arg min}} \quad \dfrac{1}{2}\mathbf{y}\Sigma\mathbf{y}-\sum_{j=1}^n b_i\ln{(b_i)}$$
the gradient $\frac{\partial \mathbf{y}^*}{\partial\mathbf{b}}$ can be obtained by differentiating the KKT conditions. The resulting jacobian provides the necessary gradients for backpropagation.\\

\textbf{Differentiability of the Risk Budgeting Optimization Layer}\\

The bounded softmax layer introduces additional constraints by ensuring a lower bound $u$ on the risk budgets. Its gradient propagation follows from differentiating the optimality conditions of the constrained optimization problem \eqref{eq:bounded}. The Jacobian of the bounded softmax transformation is computed as:
\begin{equation*}
    \dfrac{\partial\mathbf{b}}{\partial\mathbf{x}} = J(\mathbf{x})
\end{equation*}
where $J(\mathbf{x})$ is derived in Appendix \ref{sec:boundedSoftmax}. This Jacobian is then used to propagate gradients to earlier layers.\\

\textbf{Backpropagation}\\

Using the computed Jacobians, the gradients of the loss function are propagated through the model. The backward pass consists of:
\begin{enumerate}
    \item $\dfrac{\partial\mathcal{R}}{\partial\mathbf{w}}$: Loss function gradient with respect to the normalized allocation. 
    \item $\dfrac{\partial\mathbf{w}}{\partial\mathbf{y}}$: Allocation gradient with respect to the risk budgeting layer solution.
    \item $\dfrac{\partial\mathbf{y}}{\partial\mathbf{b}}$: Propagating gradients through the risk budgeting optimization layer. 
    \item $\dfrac{\partial\mathbf{b}}{\partial\mathbf{x}}$: Propagating gradients through the bounded softmax layer. 
    \item $\dfrac{\partial\mathbf{x}}{\partial\theta}$: Propagating gradients to earlier neural network layers for parameter updates. 
\end{enumerate}

\subsubsection{Loss Function}

There are numerous ways to define a loss function for an end-to-end framework. For example, \cite{Uysal2023} and \cite{Anis2025} employed decision-based measures, specifically the Sharpe Ratio, to guide the learning process. In their approach, the model is trained by maximizing the Sharpe Ratio, effectively making the loss function its negative counterpart:
\begin{equation*}
    \mathcal{R_{SR}(\mathbf{w})}=-\dfrac{\mathbb{E}[r^w]}{\sigma^w}
\end{equation*}
where $r_w$ represents the portfolio return for allocation $\mathbf{w}$ and $\sigma_w$ its standard deviation. We found this loss function gives a dual formulation aligned with the nature of risk budgeting, where the convex optimization layer minimizes portfolio variance, while the neural network seeks to maximize expected returns. By optimizing these complementary objectives, the model achieves a balance between risk and return, leading to robust portfolio allocations.

For the analysis of stability we will also use the Sharpe Ratio and the cumulative return to define the loss function to train two different architectures, having a duality in the end-to-end problem: the risk budgeting minimize the variance and the neural network maximize the return, using the cumulative return as our standard performance metric for results.
\begin{equation*}
    \mathcal{R_{CR}(\mathbf{w})}=-\prod_{t=1}^T \left(1+r^w_t\right)
\end{equation*}

\section{Computational Experiments}
\label{sec:results}

\subsection{Computational Setup}

All computations were conducted using the \texttt{Python} programming language. The bounded softmax and risk budgeting optimization problems were implemented as implicit layers within a neural network, following the architecture presented in Figure \ref{fig:architecture1}. These layers were formulated using disciplined convex programming principles and solved using the \texttt{ECOS 2.0.14} solver, ensuring numerical stability and efficient optimization.

The convex optimization layers were implemented using the \texttt{cvxpylayers} package within the \texttt{PyTorch} framework, allowing for seamless integration of differentiable convex programs into the end-to-end learning process. This approach enables efficient gradient propagation through the optimization layers, facilitating stable training and robust convergence.

The neural network training process was executed on a machine equipped with a 13th Gen Intel(R) Core(TM) i7-13620H 2.40GHz processor and 16.0 GB RAM.

\subsection{Addressing stability concerns}

This first set of results demonstrates the improvements in stability achieved by our proposed framework. To assess this improvement, we use the framework introduced in \cite{Uysal2023} as a benchmark for comparison.

\subsubsection{Data}

For this stability analysis, we use daily returns from seven exchange-traded funds (ETFs) to represent broad stock, bond, and commodity market conditions. The selected ETFs include: VTI (Vanguard Total Stock Market ETF), IWM (iShares Russell 2000 ETF), AGG (iShares Core U.S. Aggregate Bond ETF), LQD (iShares iBoxx Investment Grade Corporate Bond ETF), MUB (iShares National Muni Bond ETF), DBC (Invesco DB Commodity Index Tracking Fund), and GLD (SPDR Gold Shares). The dataset spans from 2011 to 2021, with daily frequency, and was sourced from Yahoo Finance.

To construct the dataset, we compute daily log-returns based on adjusted closing prices to account for dividends and stock splits. Any missing values due to market closures were forward-filled to maintain data continuity. The dataset is split into two periods:
\begin{itemize}
    \item In-Sample Period (2011-2016/12): Used for training and hyperparameter tuning.
    \item Out-of-Sample Period (2017-2021/12): Used for evaluating model performance.
\end{itemize}

For hyperparameter selection, we further subdivide the in-sample data, using 2011-2014/12 for training and 2015-2016/12 for validation. The ETFs were selected to ensure diverse market representation, allowing the framework to adapt to various economic conditions. We assume a frictionless market and a static rebalancing strategy.

\subsubsection{In-Sample Results}

To determine the optimal hyperparameters, we perform a random grid search with 100 iterations during the validation period (2015-01-01 to 2016-12-31). The hyperparameter search space is defined as follows:
\begin{itemize}
    \item Hidden layer neurons: \{7, 16, 32\},
    \item Learning rate: \{0.05, 0.1, 0.5, 1.0, 5.0, 10, 50, 100, 150, 200, 250, 300, 350, 400\},
    \item Training steps: \{5, 10, 15, 20, 25, 30\}
\end{itemize}

The learning rate is a hyperparameter that determines how much a model's parameters (weights and biases) are adjusted in response to the loss at each training step. Specifically, during backpropagation, the network computes the gradient of the loss function with respect to each parameter, and the learning rate scales these gradients before updating the parameters. A high learning rate can speed up the initial learning process by taking larger steps in the parameter space; however, it may also cause instability and overshooting of the minimum, leading to divergent or oscillatory training. Conversely, a low learning rate results in smaller, more controlled updates that often promote more reliable convergence—but it can slow down training significantly and may cause the model to become trapped in suboptimal local minima.

Optimizing the number of neurons in a neural network is equally critical because it directly affects the model's capacity to learn and generalize. Too few neurons may fail to capture the complexity of the data, resulting in underfitting where the model misses important patterns. On the other hand, too many neurons can lead to overfitting, where the model memorizes the training data (including noise) and performs poorly on unseen data. The objective is to strike a balance that provides enough complexity to model the data effectively without overcomplicating the network, ensuring robust performance on both training and validation datasets.

Finally, training steps refer to the number of iterations during which the model updates its parameters. At each step, a batch of data is processed, the loss is computed, and gradients are used to adjust the model's weights. The number of training steps determines how long the model has to learn: too few steps may result in underfitting, while too many can cause overfitting. To maintain a simple architecture, we decided to optimize this parameter as a fixed hyperparameter. However, another alternative is to run the training algorithm for the number of steps required to reach a desired loss level.

We include “aggressive” learning rates (i.e., values $\geq 1.0$) to compare with those used by \cite{Uysal2023}. Figure~\ref{fig:OptSR} illustrates the results of our hyperparameter optimization. Each line in the figure corresponds to one model configuration, showing how the chosen hyperparameter values—such as learning rate, number of neurons in the hidden layer, and number of training steps—lead to a particular sharpe ratio at the end of the validation period. The red line represents the best model, which we identify by its highest Sharpe ratio during validation. Meanwhile, the four blue lines indicate other model configurations that also performed well, completing the top five models (i.e. highest sharpe ratio) evaluated during the hyperparameter optimization.


\begin{figure}[H]
    \centering
    \includegraphics[width=0.9\textwidth]{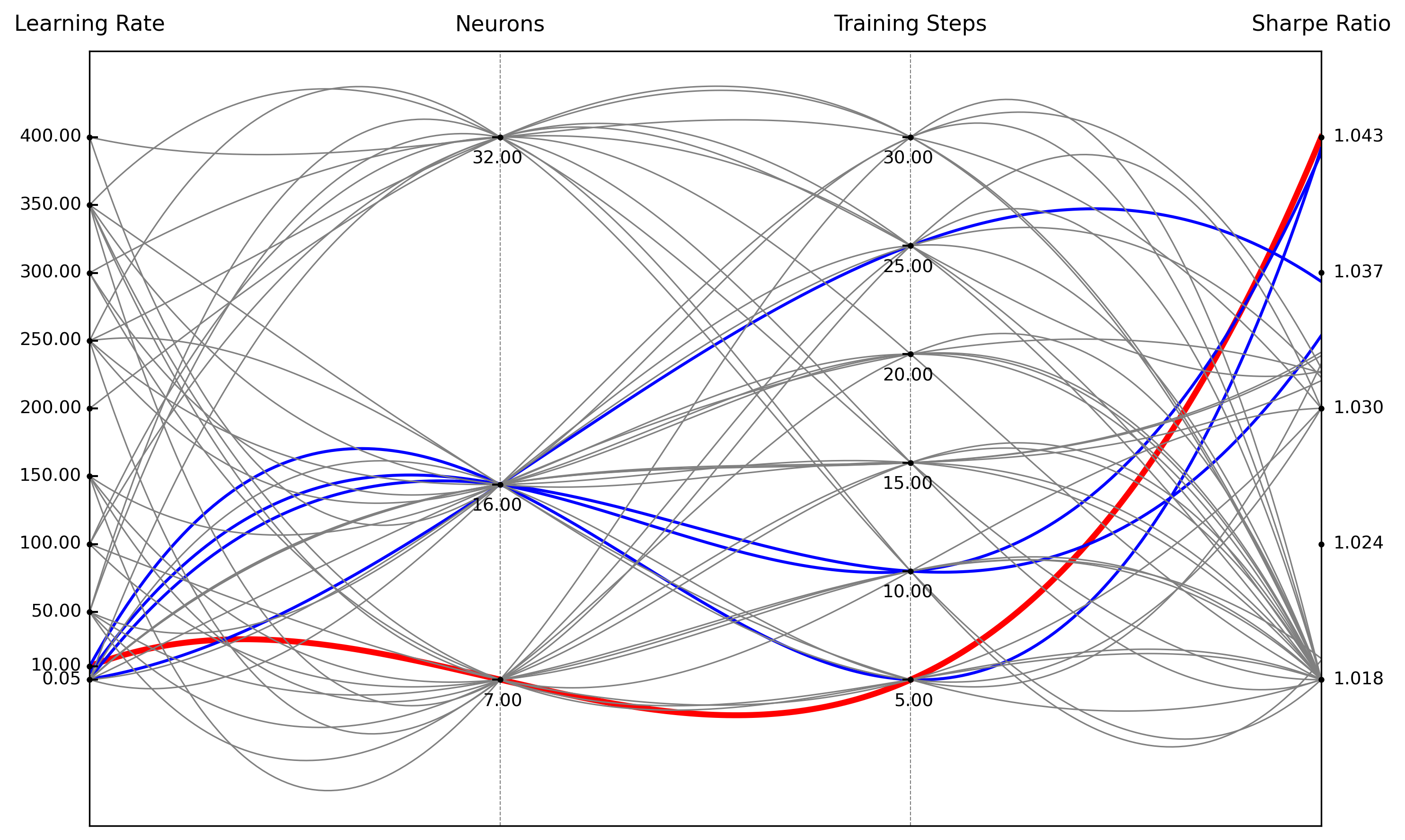}
    \caption{Hyperparameter Tuning results for Sharpe Ratio}
    \label{fig:OptSR}
\end{figure}

The best model (red line) is an end-to-end framework featuring a hidden layer with $7$ neurons, a learning rate of $10$, and trained over $5$ steps. In contrast, the other top-performing models (blue lines) utilize a hidden layer with 16 neurons and require a greater number of training steps, resulting in a more computationally complex architecture. A similar process is carried out using cumulative return as the loss function, yielding the results shown in Figure \ref{fig:OptCR}. In this scenario, the best model employs $16$ neurons and requires more training steps. In contrast, the model trained with the Sharpe Ratio converges faster, which is expected given its lower neuron count and reduced computational complexity.

To assess the stability of the optimized models, we repeat the training process across 15 different random seeds and measure the dispersion of cumulative returns during the training period. The results for this stability analysis are presented in Figure \ref{fig:VarTrainProposed}.

\begin{figure}[H]
    \centering
    \includegraphics[width=0.9\textwidth]{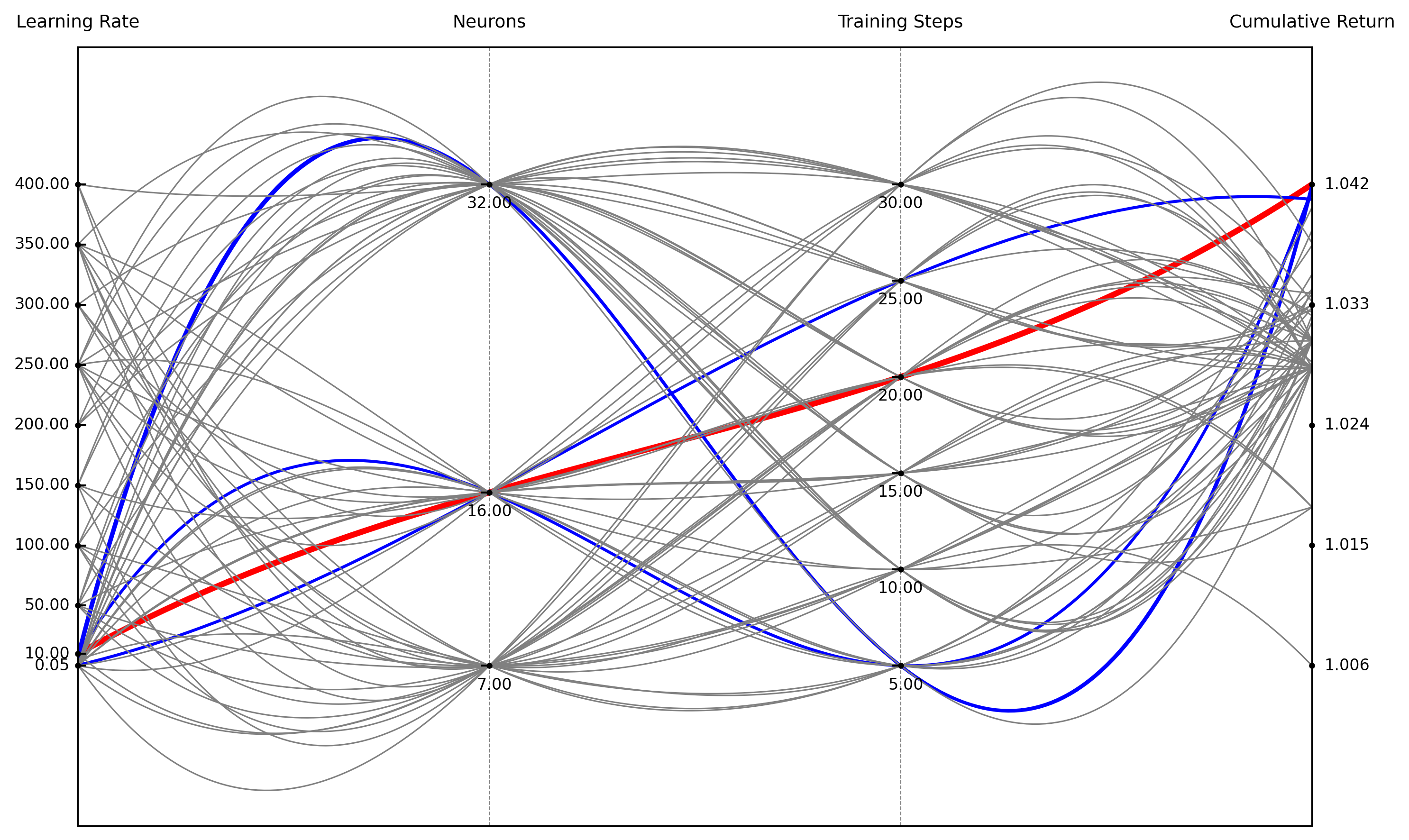}
    \caption{Hyperparameter Tuning results for Cumulative Return}
    \label{fig:OptCR}
\end{figure}

\begin{figure}[H]
    \centering
    \includegraphics[width=0.9\textwidth]{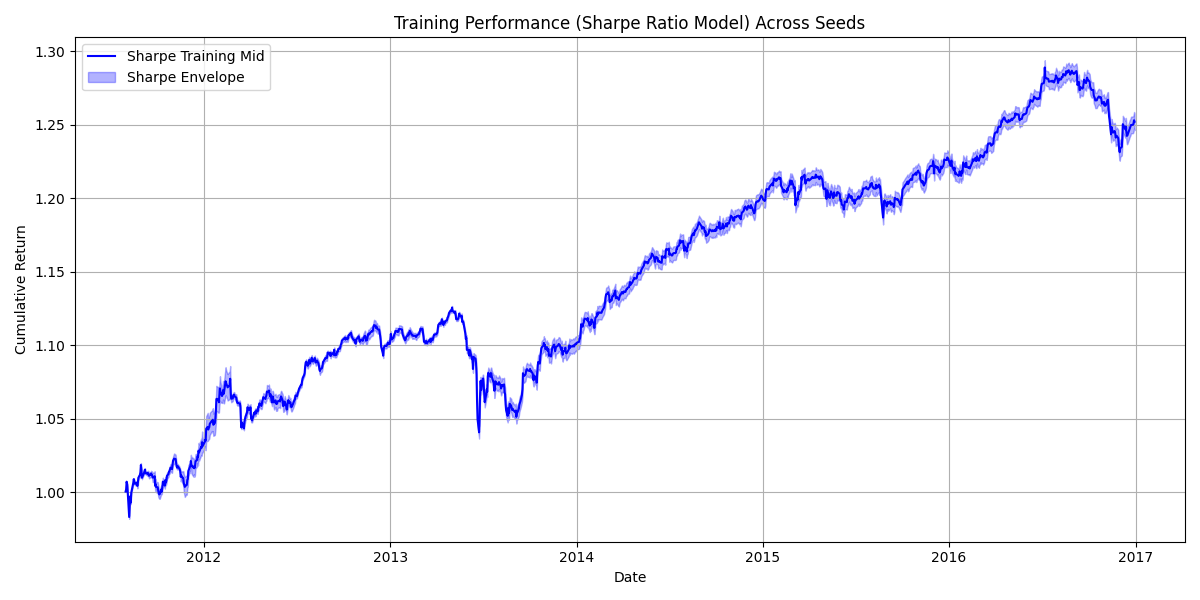}
    \caption{Dispersion of cumulative return during training period for sharpe-ratio.}
    \label{fig:VarTrainProposed}
\end{figure}
    
A comparison of the results of the dispersion (measured as the range statistic) between our framework and the replication of the existing approach in \cite{Uysal2023} is summarized in Table \ref{tab:sumInSample}. The results demonstrate a significant reduction in dispersion during both the training period and the evaluation period, highlighting the stability improvements introduced by our framework when using the Sharpe Ratio as the risk-reward function.

We repeat this process for the model trained using the cumulative return, we show the graph that represent the results for this case in Figure \ref{fig:VarTrainProposedCRet}. Although for this risk-reward function the Uysal model was more stable (lower dispersion than sharpe ratio's model) we still are getting better results.

\begin{figure}[H]
    \centering
    \includegraphics[width=0.9\textwidth]{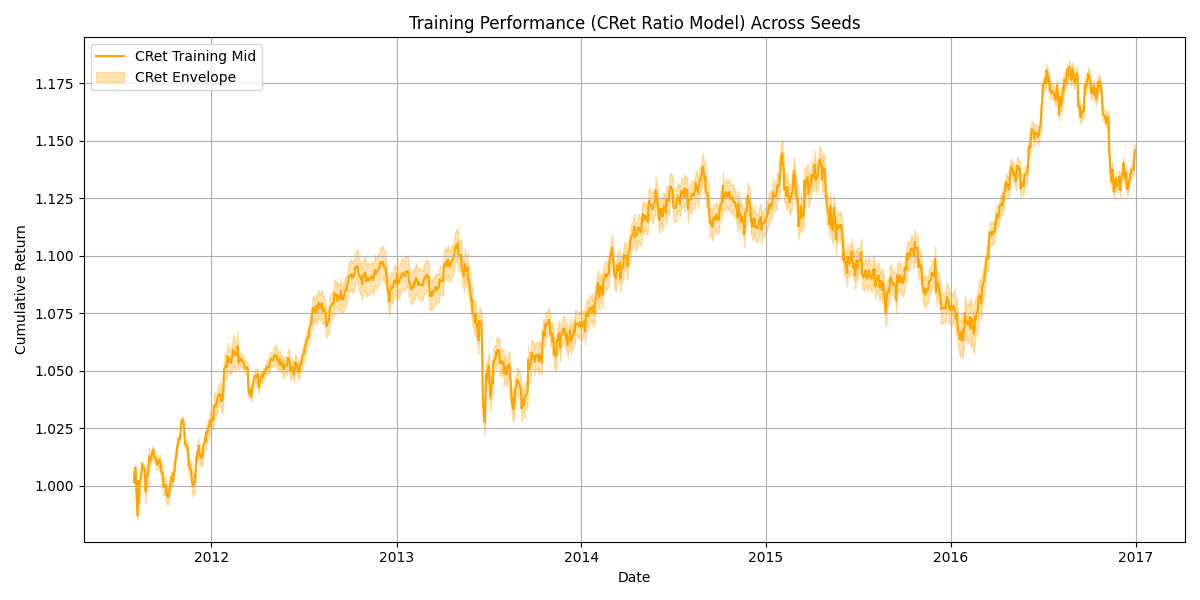}
    \caption{Dispersion of cumulative return during training period for Cumulative Return.}
    \label{fig:VarTrainProposedCRet}
\end{figure}

\begin{table}[H]
  \centering
  \caption{Summary of dispersion statistics for In-Sample period.}
  \label{tab:sumInSample}
  \setlength\tabcolsep{0pt}
  \begin{tabular*}{\linewidth}{@{\extracolsep{\fill}}
                        l
                        S[table-format=-1.3]
                   *{2}{S[table-format= 1.3]}
                        S[table-format=-1.2]
                        S[table-format=-1.3]
                   *{2}{S[table-format= 1.3]}
                            }
    \specialrule{1.5pt}{2pt}{2pt}
    &   \multicolumn{2}{c}{Cumulative Return}   &   \multicolumn{2}{c}{Sharpe Ratio}   \\
    \cmidrule(lr){2-3}
    \cmidrule(lr){4-5}
    &   {\thead[b]{Uysal}}
        &   {\thead[b]{Proposed}}
            &   {\thead[b]{Uysal}}
                            &   {\thead[b]{Proposed}}                    \\
    \midrule
    $\text{Maximum }v(t)$  & 20.07\% & \textbf{1.71}\% & 10.19\% & \textbf{1.91}\%  \\
    $\text{Average }v(t)$  & 7.78\%  & \textbf{1.02}\%  & 5.83\%  & \textbf{0.81}\%  \\
    $\text{Last day }v(t)$ & 13.75\% & \textbf{0.56}\% & 9.96\%  & \textbf{1.13}\%  \\
    \specialrule{1.5pt}{2pt}{2pt}
  \end{tabular*}
\end{table}

Finally, we assess whether the observed improvements in stability come at the expense of performance, measured in terms of cumulative return. Figure \ref{fig:CrSrRp} has the time series of average cumulative return for both best models across $15$ seeds and confirms that, despite the stabilization of results, we expect an end-to-end framework trained with Sharpe-Ratio that outperforms during the whole In-Sample period the risk-parity benchmark. Moreover, our architecture exhibits convergence independent to initial conditions, ensuring a global optimum solution.

By other hand, the end-to-end framework trained with cumulative returns seems to be more sensitive to market conditions and giving similar outcomes that the risk parity strategy. This result is not surprising since the training algorithm of neural network parameters are not taking into account the return-variance trade off.

\begin{figure}[H]
    \centering
    \includegraphics[width=0.9\textwidth]{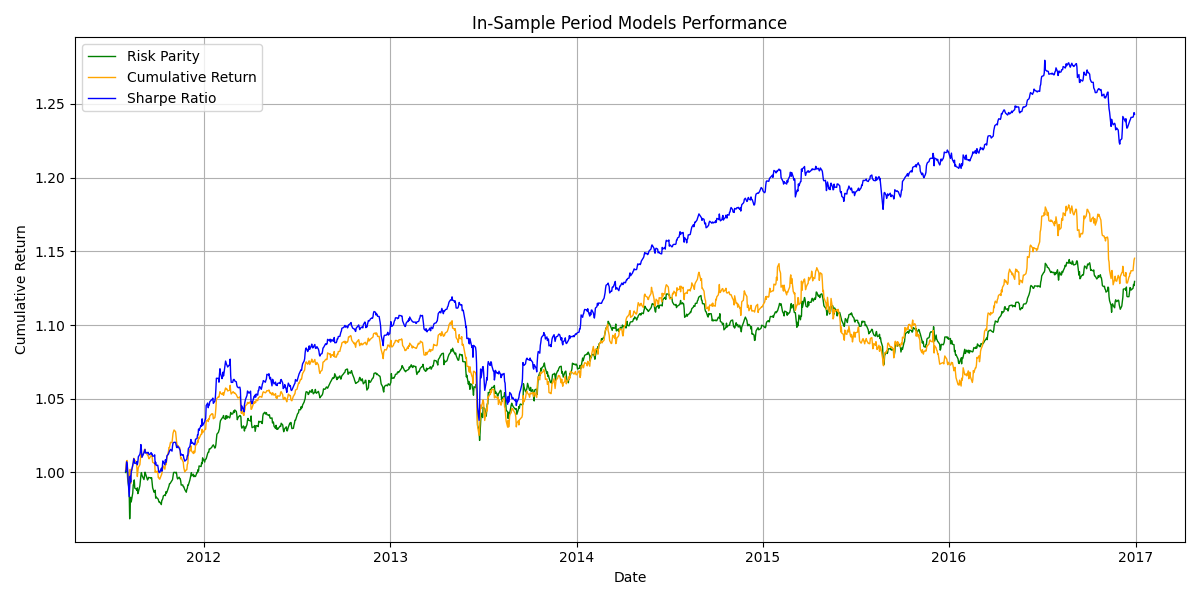}
    \caption{Mean Cumulative return of optimized end-to-end models.}
    \label{fig:CrSrRp}
\end{figure}

\subsubsection{Out-of-Sample Results}

To assess the robustness and generalization capability of our proposed framework, we evaluate its performance during the out-of-sample period (2017-2021/12). Specifically, we analyze the dispersion of cumulative returns across multiple random seeds to examine the stability of the model to different initializations.

Figure \ref{fig:VarTestProposed} presents the dispersion of cumulative return trajectories obtained from 15 different random seeds during the evaluation period. The results indicate that our framework maintains the performance in out-of-sample conditions, mitigating the instability to initial conditions observed in the benchmark model. Moreover, it shows that we do not have and overfitting problem using this data split.

\begin{figure}[H]
    \centering
    \includegraphics[width=0.9\textwidth]{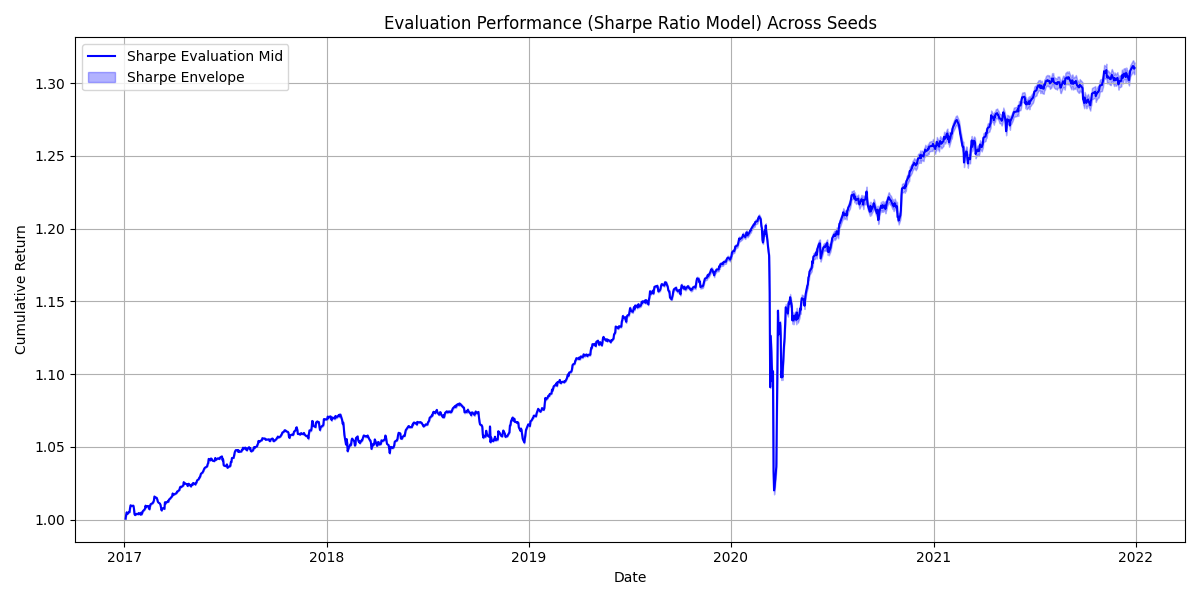}
    \caption{Dispersion of cumulative returns during evaluation period for sharpe-ratio.}
    \label{fig:VarTestProposed}
\end{figure}

To further quantify the stability improvements, Table \ref{tab:sumOutSample} compares the dispersion results of our framework against those of the benchmark model from \cite{Uysal2023}. The proposed model exhibits significantly lower dispersion throughout the evaluation period, reinforcing its capacity to produce consistent portfolio allocations.

Again, this process is also done for the model trained with cumulative return, obtaining the results in Figure \ref{fig:VarTestProposedCRet} and Table \ref{tab:sumOutSample}.

\begin{figure}[H]
    \centering
    \includegraphics[width=0.9\textwidth]{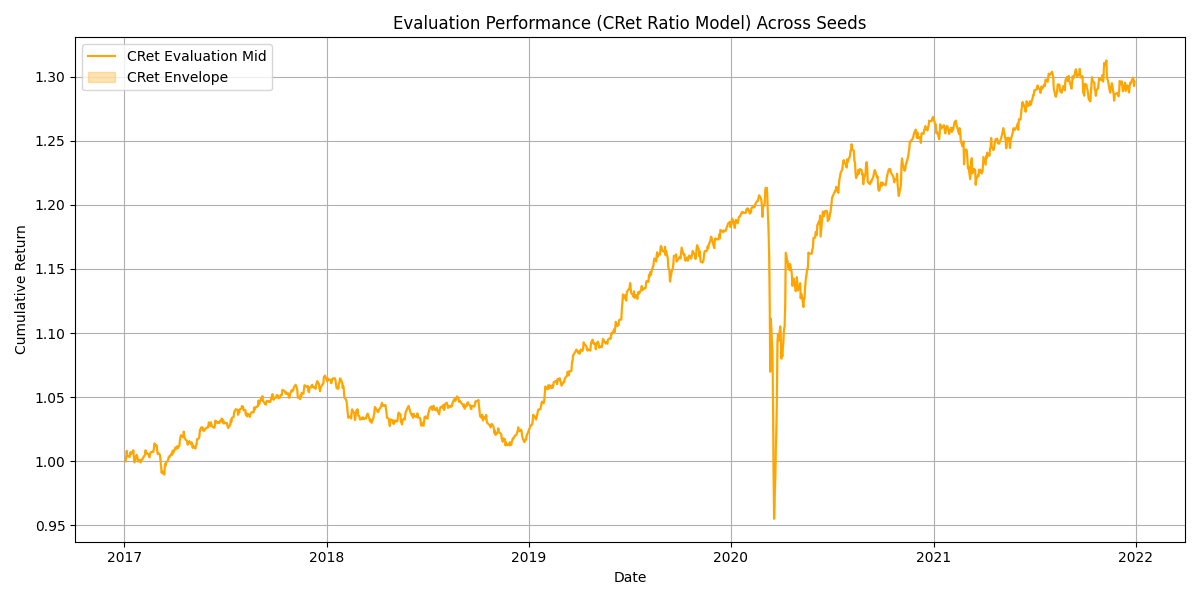}
    \caption{Dispersion of cumulative returns during evaluation period for sharpe-ratio.}
    \label{fig:VarTestProposedCRet}
\end{figure}
\begin{table}[H]
  \centering
  \caption{Summary of dispersion statistics for Out-of-Sample period.}\label{tab:sumOutSample}
  \setlength\tabcolsep{0pt}
  \begin{tabular*}{\linewidth}{@{\extracolsep{\fill}}
                        l
                        S[table-format=-1.3]
                   *{2}{S[table-format= 1.3]}
                        S[table-format=-1.2]
                        S[table-format=-1.3]
                   *{2}{S[table-format= 1.3]}
                            }
    \specialrule{1.5pt}{2pt}{2pt}
    &   \multicolumn{2}{c}{Cumulative Return}   &   \multicolumn{2}{c}{Sharpe Ratio}   \\
    \cmidrule(lr){2-3}
    \cmidrule(lr){4-5}
    &   {\thead[b]{Uysal}}
        &   {\thead[b]{Proposed}}
            &   {\thead[b]{Uysal}}
                            &   {\thead[b]{Proposed}}                    \\
    \midrule
    $\text{Maximum }v(t)$  & 11.52\% & \textbf{0.0183}\% & 11.32\% & \textbf{0.87}\%  \\
    $\text{Average }v(t)$  & 4.87\%  & \textbf{0.0073}\%  & 4.49\%  & \textbf{0.29}\%  \\
    $\text{Last day }v(t)$ & 10.87\% & \textbf{0.0181}\% & 11.27\%  & \textbf{0.78}\%  \\
    \specialrule{1.5pt}{2pt}{2pt}
  \end{tabular*}
\end{table}

These findings confirm that our end-to-end framework not only stabilizes results in the training phase but also preserves its advantages in an out-of-sample setting. The reduced dispersion in cumulative returns suggests that our approach enhances consistency and robustness, addressing the instability observed in the benchmark framework. Moreover, the sensitivity to market conditions of the model trained with cumulative returns is again an observable feature in Figure \ref{fig:CrSrRpTest}: After 2019, the model gains momentum and reaches the performance of the Sharpe Ratio–trained model starting in the third quarter of the year; from that point on, both models recover similarly from the shock caused by the COVID‑19 pandemic. Most important, both end-to-end frameworks still improve the performance of risk parity strategies, supporting the results in the existing literature.

\begin{figure}[H]
    \centering
    \includegraphics[width=0.9\textwidth]{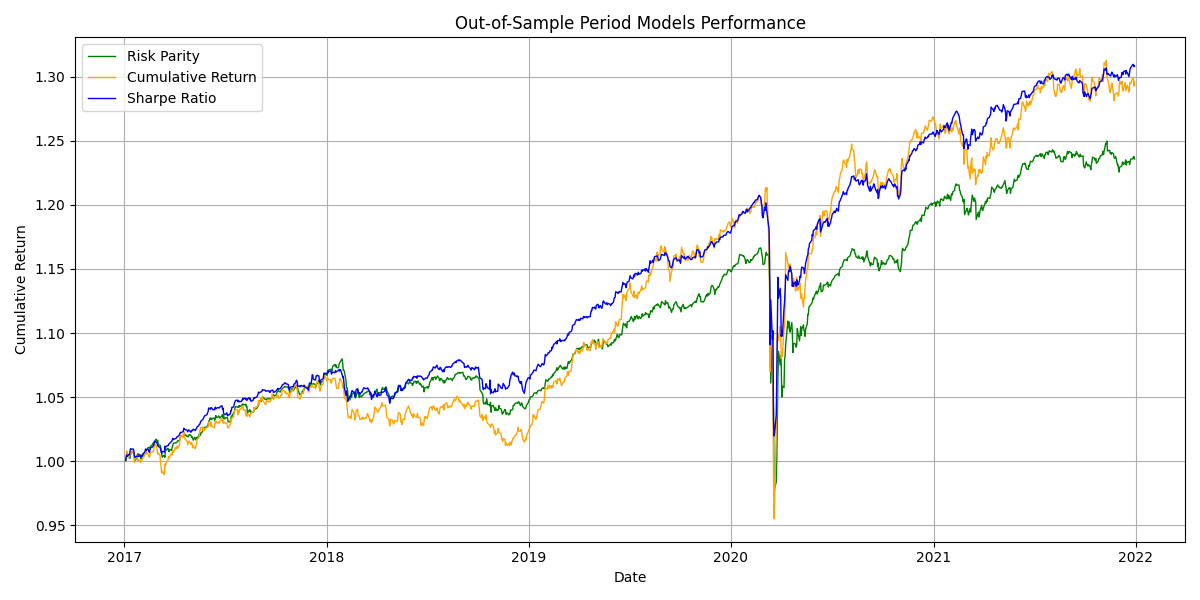}
    \caption{Mean Cumulative return of optimized end-to-end models.}
    \label{fig:CrSrRpTest}
\end{figure}

\section{Conclusion}
\label{sec:conclude}

In this study, we explored the existing numerical instability of end-to-end models with respect to the initial weight conditions in neural networks, aiming to assess the feasibility of constructing an environment capable of yielding reliable investment strategies. Our focus on risk-budgeting portfolios is particularly relevant given the current debate surrounding risk-parity strategies in financial markets.

Our findings reveal that the proposed model significantly reduces the dispersion of the ``optimal" solution when trained under different initial weight conditions. This reduction in dispersion indicates that the architecture converges toward a global solution, effectively avoiding the pitfalls of local optima. We attribute this behavior to our model’s treatment of risk factors: by constraining these factors from reaching extreme values, we maintain the uniqueness of the risk-budgeting solution.

These results underscore the importance of accounting for the specific dynamics of different portfolio optimization models when designing end-to-end architectures. They also pave the way for the broader implementation of deep learning techniques in the financial industry, as a globally convergent model increases confidence in the portfolio management process.

Despite these promising outcomes, our study identifies limitations in the current optimization solvers. Future work should focus on integrating more robust solvers, such as those available through CvxPyLayer, to enhance performance and extend the applicability of our approach. Additionally, exploring the scalability of this methodology to larger and more diverse asset classes, as well as its performance in live market environments, would be valuable. A comparative analysis with traditional portfolio optimization methods could further highlight the advantages of our deep learning-based strategy.

\subsection{Work In Progress}

We plan to extend the use of risk-reward functions to a broader range of satisfaction measures by following the framework presented in \cite{Meucci2007}. Specifically, we will define the loss function in an unsupervised setting and evaluate our architecture using representative measures such as the certainty equivalent (expected utility indices), Value at Risk (quantile-based indices), and Expected Shortfall (coherent indices). To assess robustness, we will test the model’s performance using real market conditions from recent years, and using several metrics in literature not only cumulative return e.g. maximum drawdown, Information ratio, etc.

By conducting these experiments, our goal is to develop a robust framework that remains stable across various market conditions and serves as an attractive portfolio management tool in today's data-rich environment. Additionally, we plan to explore the interpretability of the risk preferences learned by the model and extend the approach to other asset classes, further validating its adaptability and practical utility.

Another research avenue involves incorporating a module into the framework to estimate the covariance matrix using shrinkage techniques for factor models. This work is in its early stages, primarily due to the dimensionality challenges encountered when constructing portfolios with thousands of assets. By doing so, we aim to develop an optimal alternative to relying on the sample covariance matrix, which is notoriously unstable in high-dimensional settings. To our knowledge, the only attempt in the deep learning literature to improve covariance estimation for static portfolio settings is presented in \cite{Huynh2022}, which employs an asymmetric autoencoder model.

\appendix
\section{Bounded Softmax Function}
\label{sec:boundedSoftmax}

\begin{theorem}[Bounded Softmax]
Let be $u$ the desired lower bound to the softmax function and $k=\#(A)$, with
$$A=\left\{i\in\{1,2,...,n\}: \dfrac{e^{x_i}}{\sum_{j=1}^n e^{x_j}}\geq u\right\}$$
The function 
    \begin{equation}
        \mathbf{b}_i(\mathbf{x})=\left\{\begin{array}{cl}
            \dfrac{e^{x_i}}{\sum_{j\in A} e^{x_j}}\cdot\left(1-(k-n)u\right) & \text{if } i\in A\\
            u & \text{if } i\notin A
        \end{array}\right.
    \end{equation}
is the unique solution to problem \ref{eq:bounded}.
\end{theorem}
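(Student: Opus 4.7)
The plan is a textbook KKT analysis of the strictly convex program \eqref{eq:bounded}. First, I would record existence and uniqueness: the negative-entropy term $\sum_i b_i \ln b_i$ has Hessian $\operatorname{diag}(1/b_i) \succ 0$ on the open simplex, so the objective is strictly convex, and the feasible set is a nonempty convex compact polytope whenever $nu \le 1$. Hence a minimizer exists and is unique. Slater's condition is met for $nu < 1$, so the KKT conditions are necessary and sufficient, and the strict upper bound $b_i < 1$ is automatic from $b_j \ge u > 0$ for $j \ne i$, carrying no multiplier.

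Next, I would form the Lagrangian
\begin{equation*}
L(\mathbf{b},\lambda,\boldsymbol{\mu}) \;=\; \sum_{i=1}^n b_i \ln b_i - \mathbf{x}^\top \mathbf{b} - \lambda\bigl(\mathbf{1}^\top \mathbf{b}-1\bigr) - \sum_{i=1}^n \mu_i(b_i - u),
\end{equation*}
with $\mu_i \ge 0$. Stationarity gives $\ln b_i + 1 - x_i - \lambda - \mu_i = 0$, i.e.\ $b_i = \exp(x_i - 1 + \lambda + \mu_i)$, and complementary slackness $\mu_i(b_i - u) = 0$ splits the indices into an inactive set (with $b_i > u$ and $\mu_i = 0$) and an active set (with $b_i = u$). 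On the inactive set, $b_i = C e^{x_i}$ with $C := e^{\lambda - 1}$; letting $k$ denote the size of the inactive set and summing against $\mathbf{1}^\top \mathbf{b} = 1$ pins down
\begin{equation*}
C = \frac{1 - (n-k)u}{\sum_{j\,\text{inactive}} e^{x_j}},
\end{equation*}
which reproduces the scaling factor in \eqref{eq:bsoftmax}.

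The crux, and the main obstacle, is proving that the inactive set coincides with the set $A$ defined in the statement through the ordinary softmax. I would attack this by verifying that the candidate $\mathbf{b}$ built from $A$ in \eqref{eq:bsoftmax} satisfies every KKT condition: stationarity holds by construction (take $\mu_i = 0$ on $A$ and $\mu_i = \ln u + 1 - x_i - \lambda$ off $A$); primal feasibility $\mathbf{1}^\top \mathbf{b} = 1$ holds by the choice of normalizer; the bound $b_i \ge u$ on $A$ follows from the defining softmax inequality once one checks that the renormalization factor $(1-(n-k)u)\bigl(\sum_j e^{x_j}\bigr)/\bigl(\sum_{j\in A} e^{x_j}\bigr)$ does not shrink the raw softmax value below $u$; and dual feasibility $\mu_i \ge 0$ for $i \notin A$ reduces to $u \ge e^{x_i - 1 + \lambda}$, i.e.\ the complementary inequality in the definition of $A$ combined with the pinned-down value of $\lambda$. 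Uniqueness of the KKT solution (from the first step) then forces \eqref{eq:bsoftmax} to be the unique optimum. I expect the bookkeeping in this final verification — in particular keeping the two characterizations of $A$ (via the raw softmax versus via the post-renormalization inequality $Ce^{x_i} \ge u$) consistent, and ruling out pathological cases where a single pass of the softmax-threshold definition misidentifies the active set — to be the delicate point of the proof.
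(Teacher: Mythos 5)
Your overall strategy --- strict convexity for uniqueness, Slater for necessity and sufficiency of KKT, stationarity plus complementary slackness to get $b_i = Ce^{x_i}$ off the active set, and normalization to pin down $C$ --- is exactly the paper's route (the paper carries extra multipliers $\gamma,\eta$ for the redundant constraints $0<\mathbf{b}<1$ and then discards them, but this is cosmetic). You are in fact more careful than the paper, because you isolate the one step neither of you completes: showing that the KKT active set coincides with the complement of the set $A$ defined through the \emph{raw} softmax. Unfortunately, that step is not mere bookkeeping --- it fails. Since every $j\notin A$ satisfies $e^{x_j} < u\sum_\ell e^{x_\ell}$, one gets $\sum_{j\in A}e^{x_j} > \bigl(1-(n-k)u\bigr)\sum_\ell e^{x_\ell}$, so the constant $C = \bigl(1-(n-k)u\bigr)/\sum_{j\in A}e^{x_j}$ is \emph{strictly smaller} than $1/\sum_\ell e^{x_\ell}$ whenever $A\neq\{1,\dots,n\}$. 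The renormalization therefore always shrinks the surviving coordinates, and any $i\in A$ whose raw softmax value is only marginally above $u$ can be pushed below $u$, violating primal feasibility. Concretely, take $n=3$, $u=0.3$, and $x$ with raw softmax $(0.45,\,0.31,\,0.24)$: then $A=\{1,2\}$, the rescaling factor is $0.7/0.76\approx 0.921$, and the candidate gives $b_2\approx 0.286 < u$. The true optimum is $(0.4,\,0.3,\,0.3)$, whose active set $\{2,3\}$ strictly contains $A^{c}=\{3\}$.

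So the gap you flagged is a genuine obstruction to the theorem as stated, not just to your write-up: the correct active set is characterized by the fixed-point condition $Ce^{x_i}\le u$ with $C$ depending on the set itself, and identifying it requires an iterative thresholding (repeatedly remove coordinates that fall below $u$ after renormalization and recompute $C$), as in standard simplex-projection algorithms. A single pass of the raw-softmax threshold, which is what both you and the paper use to define $A$, identifies it only when no coordinate sits in the window $\bigl[u,\; u/\rho\bigr)$ with $\rho = C\sum_\ell e^{x_\ell}$. The paper's own proof silently assumes the case split ``$b_j>u$ for $j\in A$, $b_j=u$ otherwise'' and never verifies primal feasibility on $A$ or dual feasibility $\mu_j\ge 0$ off $A$, so it shares this defect. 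To repair your argument you would either restate the theorem with $A$ defined by the fixed-point condition, or add a lemma proving the iterative scheme terminates at the KKT point; the uniqueness half of your argument is fine as is.
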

\begin{proof}
The KKT conditions for this problem are
\begin{itemize}
    \item{\textbf{Stationary:} 
    \begin{align*}
        \mathbf{0}= \dfrac{\partial}{\partial \mathbf{b}}\Biggl(\sum_{j=1}^n b_i\ln{(b_i)}-\mathbf{x}^\top \mathbf{b}-\sum_{j=1}^n \gamma_j b_j +&\sum_{j=1}^n \eta_j (b_j-1)  +\cdots\\
        &\cdots +\sum_{j=1}^n  \mu_j (u-b_j) + \lambda \left[1-\sum_{j=1}^n b_j\right]\Biggr)
    \end{align*}
    }
    \item{\textbf{Complementary slackness:}
        $$-b_j \gamma_j=0,\qquad \eta_j (b_j-1) = 0, \qquad \mu_j (u-b_j) = 0$$
    }
    \item{\textbf{Primal Feasibility:}
        $$1^\top \mathbf{b}=1,\qquad 0<\mathbf{b}<1,\qquad \mathbf{b}\geq u$$
    }
\end{itemize}
Then, from the stationary condition we get
\begin{align*}
    \dfrac{\partial}{\partial \mathbf{b}}\left( \mathbf{b}^\top\ln{\mathbf{(b)}}-\mathbf{x}^\top \mathbf{b}-\mathbf{\gamma}^\top \mathbf{b} +\eta^\top (\mathbf{b}-\mathbf{1}) +\mathbf{\mu}^\top (\mathbf{u}-\mathbf{b}) +\lambda \left[1-\mathbf{1}^\top \mathbf{b}\right] \right)&= \mathbf{0}\\
    \ln{\mathbf{(b)}}+\mathbf{1}-\mathbf{x}-\gamma +\eta -\mu - \lambda &= \mathbf{0}\\
    -\mathbf{1}+\mathbf{x}+\gamma -\eta +\mu + \lambda &=\ln{\mathbf{(b)}} 
\end{align*}
Therefore, for each $j=1,...,n$,
$$b_j = e^{-1+x_j +\gamma_j -\eta_j +\mu_j +\lambda}$$
Now, using the primal feasibility conditions, since $0<\mathbf{b}<1$, $\gamma_j=\eta_j=0$, thus for each $j=1,...,n$, $$b_j = e^{-1+x_j +\mu_j +\lambda}$$
Hence,
\begin{enumerate}
    \item{If $b_j > u$, $\mu_j=0$ and for each $j=1,...,n$, since $\mathbf{b}^\top \mathbf{1}=1$
    $$\sum_{j=1}^n b_j = \sum_{j=1}^n e^{-1+x_j  +\lambda}=1\quad\Longrightarrow \lambda = 1-\ln{\left(\sum_{j=1}^n e^{x_j}\right)}$$
    Then,
    \begin{align*}
        b_j &= e^{-1+x_j +\lambda}\\
        b_j &= e^{-1+x_j +\left(1-\ln{\left(\sum_{j=1}^n e^{x_j}\right)}\right)}\\
        b_j &= e^{x_j -\ln{\left(\sum_{j=1}^n e^{x_j}\right)}}\\
        b_j &= \dfrac{e^{x_j}}{\sum_{j=1}^n e^{x_j}}
    \end{align*}
    which is the classical softmax function (\ref{eq:softmax}). Moreover,
    $$A=\left\{i\in\{1,2,...,n\}: \dfrac{e^{x_i}}{\sum_{j=1}^n e^{x_j}}\geq u\right\}=\{1,2,...,n\}\Longrightarrow k=\#\left(A\right)=n$$
    and
    $$b_j =\dfrac{e^{x_j}}{\sum_{j=1}^n e^{x_j}}=\dfrac{e^{x_j}}{\sum_{j=1}^n e^{x_j}}\cdot\left(1-(k-n)u\right)$$
    }
    \item{If $b_j>u$ for $j=1,...,k$ and $b_j = u$ for $j=k+1, ..., n$, then
        $$\sum_{j=1}^k b_j = \sum_{j=1}^k e^{-1+x_j  +\lambda}=1-(n-k)u\Longrightarrow \lambda = 1+\ln{\left(1-(n-k)u\right)}-\ln{\left(\sum_{j=1}^k e^{x_j}\right)}$$
    Then, for $j=1,...,k$
    \begin{align*}
        b_j &= e^{-1+x_j +\lambda}\\
        b_j &= e^{-1+x_j +\left(1+\ln{\left(1-(n-k)u\right)}-\ln{\left(\sum_{j=1}^k e^{x_j}\right)}\right)}\\
        b_j &= e^{x_j +\ln{\left(1-(n-k)u\right)}-\ln{\left(\sum_{j=1}^k e^{x_j}\right)}}\\
        b_j &= \dfrac{e^{x_j}}{\sum_{j=1}^n e^{x_j}}\cdot \left(1-(n-k)u\right)
    \end{align*}
    which is the classical softmax function (\ref{eq:softmax}) applied over the $k$ coordinates left and multiplied by the remained weight. Thus,
    $$b_j=\left\{\begin{array}{cl}
        \dfrac{e^{x_j}}{\sum_{j=1}^n e^{x_j}}\cdot \left(1-(n-k)u\right) & \text{if } j\in A\\
        u & \text{if } j\notin A
    \end{array}\right.$$
    }
\end{enumerate}
\end{proof}


\begin{thebibliography}{}

\bibitem[Agrawal et~al., 2019]{Agrawal2019}
Agrawal, A., Amos, B., Barratt, S., Boyd, S., Diamond, S., and Kolter, J.~Z. (2019).
\newblock Differentiable convex optimization layers.
\newblock In {\em Advances in Neural Information Processing Systems}, pages 9558--9570.

\bibitem[Anis and Kwon, 2025]{Anis2025}
Anis, H.~T. and Kwon, R.~H. (2025).
\newblock End-to-end, decision-based, cardinality-constrained portfolio optimization.
\newblock 320:739--753.

\bibitem[Brodie et~al., 2009]{Brodie2009}
Brodie, J., Daubechies, I., Mol, C.~D., Giannone, D., and Loris, I. (2009).
\newblock Sparse and stable markowitz portfolios.
\newblock {\em PNAS Proceedings of the National Academy of Sciences of the United States of America}, 106(30):12267--12272.

\bibitem[Butler and Kwon, 2022]{Butler2022}
Butler, A. and Kwon, R.~H. (2022).
\newblock Data-driven integration of norm-penalized mean-variance portfolios.

\bibitem[Costa and Iyengar, 2023]{Costa2023}
Costa, G. and Iyengar, G.~N. (2023).
\newblock Distributionally robust end-to-end portfolio construction.
\newblock 23:1465--1482.

\bibitem[Dai and Wen, 2018]{Dai2018}
Dai, Z. and Wen, F. (2018).
\newblock Some improved sparse and stable portfolio optimization problems.
\newblock 27:46--52.

\bibitem[Fabozzi et~al., 2010]{Fabozzi2010}
Fabozzi, F.~J., Huang, D., and Zhou, G. (2010).
\newblock Robust portfolios: contributions from operations research and finance.
\newblock 176:191--220.

\bibitem[Feng and Palomar, 2015]{Feng2015}
Feng, Y. and Palomar, D.~P. (2015).
\newblock Scrip: Successive convex optimization methods for risk parity portfolio design.
\newblock {\em IEEE Transactions on Signal Processing}, 63(19):5285--5300.

\bibitem[Georgantas et~al., 2024]{Georgantas2024}
Georgantas, A., Doumpos, M., and Zopounidis, C. (2024).
\newblock Robust optimization approaches for portfolio selection: a comparative analysis.
\newblock 339:1205--1221.

\bibitem[Huynh and Lenhard, 2022]{Huynh2022}
Huynh, K. and Lenhard, G. (2022).
\newblock Asymmetric autoencoders for factor-based covariance matrix estimation.
\newblock {\em In Proceedings of the Third ACM International Conference on AI in Finance}, pages 403--410.

\bibitem[Meucci, 2007]{Meucci2007}
Meucci, A. (2007).
\newblock {\em Risk and Asset Allocation}.
\newblock Springer.

\bibitem[Richard and Roncalli, 2019]{Richard2019}
Richard, J.-C. and Roncalli, T. (2019).
\newblock Constrained risk budgeting portfolios: Theory, algorithms, applications \& puzzles. available at ssrn: https://ssrn.com/abstract=3331184.

\bibitem[Roncalli, 2013]{Roncalli2013}
Roncalli, T. (2013).
\newblock {\em Introduction to risk parity and budgeting}.
\newblock CRC Press.

\bibitem[Uysal et~al., 2024]{Uysal2023}
Uysal, A.~S., Li, X., and Mulvey, J.~M. (2024).
\newblock End-to-end risk budgeting portfolio optimization with neural networks.
\newblock 339:397--426.

\bibitem[Xidonas et~al., 2020]{Xidonas2020}
Xidonas, P., Steuer, R., and Hassapis, C. (2020).
\newblock Robust portfolio optimization: a categorized bibliographic review.
\newblock 292:533--552.

\bibitem[Zhang et~al., 2021]{Zhang2021}
Zhang, C., Zhang, Z., Cucuringu, M., and Zohren, S. (2021).
\newblock A universal end-to-end approach to portfolio optimization via deep learning.

\bibitem[Zhong et~al., 2024]{Zhong2024}
Zhong, J., Xu, Z., Wang, S., Wen, X., Guo, J., and Xu, Q. (2024).
\newblock Dspo: An end-to-end framework for direct sorted portfolio construction.

\end{thebibliography}
\end{document}